\documentclass{IOS-Book-Article}

\usepackage{mathptmx}
\usepackage{soul}\setuldepth{article} 
\usepackage[utf8]{inputenc}
\usepackage{amsmath,amsthm}
\usepackage{graphicx}
\usepackage{booktabs}
\usepackage{multirow}
\usepackage{url}
\usepackage{hyperref}
\usepackage{xurl}
\usepackage[table]{xcolor}
\usepackage{enumitem}

\usepackage{tabularx}
\usepackage{amssymb}
\usepackage{stmaryrd}  
\usepackage{tikz}
\usetikzlibrary{
  arrows.meta,
  positioning,
  fit,
  backgrounds,
  shapes.geometric,
  shapes.multipart
}
\usepackage{listings}
\newcommand{\activates}{\mathit{activates}}
\newcommand{\founds}{\mathit{founds}}
\newcommand{\Perm}{\mathsf{P}}         

\newcommand{\hasrem}{\mathit{has\_rem}}
\newcommand{\PermPos}{\mathsf{Permission}}   
\newcommand{\Duty}{\mathsf{Duty}}            
\newcommand{\Right}{\mathsf{Right}}     
\newcommand{\NoRight}{\mathsf{NoRight}} 
\newcommand{\Power}{\mathsf{Power}}     

\newcommand{\Disability}{\mathsf{Disability}} 
\newcommand{\Immunity}{\mathsf{Immunity}}
\newcommand{\Subj}{\mathsf{Subj}}       
\newcommand{\Proh}{\mathsf{Proh}}    

\newcommand{\Forbearance}{\mathsf{Forbearance}}
\newcommand{\odrl}[1]{\texttt{odrl:#1}}

\theoremstyle{definition}
\newtheorem{definition}{Definition}[section]
\newtheorem{axiom}{Axiom}[section]
\newtheorem{theorem}{Theorem}[section]
\newtheorem{proposition}{Proposition}[section]
\newtheorem{example}{Example}[section]

\newtheorem{corollary}{Corollary}[section]
\newtheorem{principle}{Principle}
\def\hb{\hbox to 11.5 cm{}}
\begin{document}

\pagestyle{headings}
\def\thepage{}
\begin{frontmatter}           
\title{What Does ODRL Mean? A Cross-Level Ontological Grounding of Permissions, Prohibitions, and Duties 
in UFO-L}
\markboth{}{D.M. Mustafa et al.\hb}
\author[A,D]{\fnms{Daham M.} \snm{Mustafa}%
\thanks{Corresponding Author. Emails:
\texttt{daham.mohammed.mustafa@fit.fraunhofer.de};
\texttt{christoph.lange-bever@fit.fraunhofer.de}}},
\author[A,D]{\fnms{Christoph} \snm{Lange}},
\author[B]{\fnms{Giancarlo} \snm{Guizzardi}},
\author[A,C]{\fnms{Diego} \snm{Collarana}},
\author[A,D]{\fnms{Christoph} \snm{Quix}}
and
\author[A,D]{\fnms{Stefan} \snm{Decker}}
\runningauthor{D.M. Mustafa et al.}
\address[A]{Fraunhofer FIT, Sankt Augustin, Germany}
\address[B]{University of Twente, The Netherlands}
\address[C]{Universidad Privada Boliviana, Bolivia}
\address[D]{RWTH Aachen University, Aachen, Germany}
\begin{abstract}
ODRL policy evaluators produce verdicts, but say nothing about the normative positions a policy brings into existence, the authority structures those positions presuppose, or who holds the power to declare a norm violated.
We formulate the Cross-Level Design Principle: any normative language
with violable, consequential norms requires both conduct-level
positions (Permission, Duty, Right, No right) and competence-level
positions (Power, Subjection, Immunity, Disability).
Applying this to ODRL, we establish that prohibition is
\emph{sanctioned} (violation possible and consequential), that
permission is underspecified across its \texttt{behaviour}
parameter (open vs.\ closed world), and that the formal semantics
covers achievement obligations only.
We ground ODRL in UFO-L, mapping each activated rule to a simple
legal relator and extending coverage from two to eight legal
positions; violation-declaration authority, implicit in every
existing evaluator, becomes an explicit Power, Subjection pair.
All axioms are mechanically verified in
Isabelle/HOL and across a 39-problem benchmark
under Vampire, E, and Z3.
\end{abstract}

\begin{keyword}
ODRL\sep UFO-L\sep Foundational Ontology\sep
Legal Positions\sep Normative Systems\sep
Deontic Logic\sep Data Spaces
\end{keyword}
\end{frontmatter}
\vspace{-15pt}
\section{Introduction}
\label{sec:introduction}
The Open Digital Rights Language (ODRL)~\cite{iannella2018odrl} is
a policy language standardised by W3C.\footnote{Benchmark available at
\url{https://github.com/Daham-Mustaf/odrl-ufol-grounding}} It has been adopted by the European Data Spaces~\cite{otto2019international,theissen2023semantics,gaiax-policy-reasoning-2023} as the lingua franca for expressing data usage and data access policies over shared assets. ODRL provides both an information model and an
ontology~\cite{iannella2018odrl} specifying how policies, rules, and
constraints are structured. The Formal Semantics specifies the behaviour of an \emph{ODRL Evaluator}, a software
component that determines, given a set of policies and a state of
the world, which rules are active, which actions are permitted or
prohibited, and which obligations have been fulfilled or
violated~\cite{odrl-formal-semantics,salas2025odrl}.
The evaluator's output is a verdict: \texttt{active},
\texttt{violated}, \texttt{fulfilled}, or \texttt{not-set}~\cite{odrl-formal-semantics}.

\noindent Consider \texttt{:KulturPortal}, an events portal that obtains the same theater
showtime records for a cultural event from two Data Space providers, the
theaters \texttt{:BerlinerEnsemble} and \texttt{:DeutschesTheater}, via
separate connectors and contracts. Each issues an \odrl{Agreement} prohibiting
redistribution of its showtimes, but with a different remedy:
\texttt{:BerlinerEnsemble} demands compensation, \texttt{:DeutschesTheater}
demands deletion of the copy.
\begin{center}
\begin{minipage}[t]{0.47\columnwidth}
\scriptsize
\begin{verbatim}
:BE-Agreement a odrl:Agreement ;
 odrl:prohibition [
  odrl:assignee :KulturPortal ;
  odrl:assigner :BerlinerEnsemble ;
  odrl:action   odrl:distribute ;
  odrl:target   :BE-Showtimes ;
  odrl:remedy [      # if violated:
   odrl:action
    odrl:compensate ;# pay theater
   odrl:assignee
    :BerlinerEnsemble ] ] .
\end{verbatim}
\end{minipage}%
\vrule\hspace{4pt}%
\begin{minipage}[t]{0.47\columnwidth}
\scriptsize
\begin{verbatim}
:DT-Agreement a odrl:Agreement ;
 odrl:prohibition [
  odrl:assignee :KulturPortal ;
  odrl:assigner :DeutschesTheater ;
  odrl:action   odrl:distribute ;
  odrl:target   :DT-Showtimes ;
  odrl:remedy [      # if violated:
   odrl:action
    odrl:delete ;    # destroy copy
   odrl:assignee
    :KulturPortal ] ] .
\end{verbatim}
\end{minipage}
\end{center}
\texttt{:KulturPortal} distributes both datasets. The ODRL Evaluator reports \texttt{violated} for both
policies, but cannot determine by whose normative
authority either remedy is enforceable: it sees the remedy
assignees but has no account of which provider holds the
Power to declare a violation and demand remedy performance. Neither prohibition carries a constraint, so both are
unconditionally active; since the regulated action was
performed, both reach
\texttt{violated}~\cite{odrl-formal-semantics}.
Each prohibition specifies a remedy, but the Formal
Semantics defines no evaluation procedure for remedy
duties and the Evaluation Report remains
unspecified~\cite{odrl-formal-semantics}.
The evaluator goes no further: it cannot determine
(\emph{i})~by what normative standing either provider
may demand remedy performance, or
(\emph{ii})~what normative standing grounds either remedy,
because the normative positions that would license these
answers (Figure~\ref{fig:grounding-reveals}, Row~6), each
provider's Right to Omission and their Power to declare
a violation, are absent from both policy graphs.

What normative positions exist when an ODRL policy is in
force, and what authority structures do those positions
presuppose? The absence of an answer has four concrete
manifestations: \textbf{(P1)}~the \texttt{behaviour} parameter selects between evaluation regimes that create ontologically
incompatible normative positions, with no account of which
positions exist under each setting;
\textbf{(P2)}~violation-declaration authority is absent
from the policy graph and silently delegated to
implementors, causing incompatible runtime determinations
across deployments;
\textbf{(P3)}~every prohibition entails a Right to demand
omission for the assigner, yet neither evaluator nor
policy graph represents it, leaving remedy selection
without a normative basis; and
\textbf{(P4)}~no existing vocabulary enables governance
frameworks to declare sanctioning authority at
policy-authoring time.

We use UFO-L as a foundational ontology to determine 
what kinds of things exist when a policy is in 
force, not to give ODRL a formal-semantics denotation. 
We then \emph{ground} ODRL in 
UFO-L~\cite{griffo2018conceptual,griffo2023powers} to 
address these problems:
P1 by ontologically characterising both \texttt{behaviour}
regimes; P2 and P4 by
constituting a Power--Subjection pair at activation; and P3 by surfacing
the assigner's Right to Omission as an entailed
correlative, extending evaluator coverage from two to eight legal positions.
\begin{figure*}[t]
\centering
\resizebox{\textwidth}{!}{%
\begin{tikzpicture}[
  box/.style={draw,rounded corners=3pt,align=center,
    minimum width=2.6cm,minimum height=0.72cm,font=\scriptsize},
  splitbox/.style={box,text width=2.45cm,inner sep=2pt,minimum height=0.62cm},
  widebox/.style={box,text width=2.5cm,inner sep=2pt},
  nativebox/.style={box,fill=black!8},
  conductbox/.style={box,fill=blue!14},
  compbox/.style={box,fill=orange!28},
  profilebox/.style={box,fill=white,draw=black!55,dashed,font=\scriptsize\ttfamily},
  arrow/.style={-{Stealth[length=4pt]},thick},
  darrow/.style={arrow,dashed}
]
\foreach \x/\t in {1.3/{ODRL rule},4.5/{Evaluator sees},%
                   7.7/{Grounding reveals},10.9/{\texttt{odrl-l:} term}}
  \node[font=\scriptsize\bfseries,align=center] at (\x,0) {\t};

\node[nativebox]  (p1) at (1.3,-0.9)  {\odrl{Permission}};
\node[nativebox]  (p2) at (4.5,-0.9)  {Permission};
\node[conductbox] (p3) at (7.7,-0.9)  {Permission\\$+$ \textbf{No Right}};
\node[profilebox] (p4) at (10.9,-0.9) {odrl-l:\\PermissionRelator};
\draw[arrow](p1)--(p2); \draw[arrow](p2)--(p3); \draw[darrow](p3)--(p4);

\node[nativebox]  (f1) at (1.3,-2.0)  {\odrl{Prohibition}};
\node[nativebox]  (f2) at (4.5,-2.0)  {Duty to Omit};
\node[conductbox] (f3) at (7.7,-2.0)  {Duty to Omit\\$+$ \textbf{Right to Omission}};
\node[profilebox] (f4) at (10.9,-2.0) {odrl-l:\\ProhibitionRelator};
\draw[arrow](f1)--(f2); \draw[arrow](f2)--(f3); \draw[darrow](f3)--(f4);

\node[nativebox]  (d1) at (1.3,-3.1)  {\odrl{Duty}};
\node[nativebox]  (d2) at (4.5,-3.1)  {Duty to Act};
\node[conductbox] (d3) at (7.7,-3.1)  {Duty to Act\\$+$ \textbf{Right to Action}};
\node[profilebox] (d4) at (10.9,-3.1) {odrl-l:\\ObligationRelator};
\draw[arrow](d1)--(d2); \draw[arrow](d2)--(d3); \draw[darrow](d3)--(d4);

\node[nativebox] (r1) at (1.3,-4.55) {\odrl{Prohibition}\\$+$ \odrl{remedy}};
\node[nativebox] (r2) at (4.5,-4.55) {Duty to Omit\\\textcolor{red!65}{(violated?)}};
\node[conductbox,splitbox] (r3a) at (7.7,-4.10) {Duty to Omit\\$+$ \textbf{Right to Omission}};
\node[compbox,splitbox]    (r3b) at (7.7,-5.00) {$+$ \textbf{Power}\\$+$ \textbf{Subjection}};
\node[profilebox] (r4) at (10.9,-4.55) {odrl-l:\\hasRemedy$^\dagger$};
\draw[arrow](r1)--(r2);
\draw[arrow](r2)--(r3a); \draw[arrow](r2)--(r3b);
\draw[darrow](r3a)--(r4); \draw[darrow](r3b)--(r4);

\node[nativebox] (s1) at (1.3,-6.45) {\odrl{Permission}\\(strong)};
\node[nativebox] (s2) at (4.5,-6.45) {Permission\\\textcolor{red!65}{(unprotected?)}};
\node[conductbox,splitbox] (s3a) at (7.7,-6.00) {Permission\\$+$ \textbf{No Right}};
\node[compbox,splitbox]    (s3b) at (7.7,-6.90) {$+$ \textbf{Immunity}\\$+$ \textbf{Disability}};
\node[profilebox] (s4) at (10.9,-6.45) {odrl-l:\\stronglyPermitted$^\dagger$};
\draw[arrow](s1)--(s2);
\draw[arrow](s2)--(s3a); \draw[arrow](s2)--(s3b);
\draw[darrow](s3a)--(s4); \draw[darrow](s3b)--(s4);

\node[nativebox] (c1) at (1.3,-8.1)  {\texttt{:pol1} vs \texttt{:pol2}\\distinct IRIs,\\same content};
\node[nativebox] (c2) at (4.5,-8.1)  {\texttt{violated} (each)\\\textcolor{red!65}{(nothing further)}};
\node[compbox,widebox] (c3) at (7.7,-8.1) {$\rho_{R1}$(\texttt{:ProviderA})\\$\neq\;\rho_{R2}$(\texttt{:ProviderB})\\Power per assigner};
\node[profilebox](c4) at (10.9,-8.1) {odrl-l:\\hasRemedy$^\dagger$};
\draw[arrow](c1)--(c2); \draw[arrow](c2)--(c3); \draw[darrow](c3)--(c4);

\draw[black!40,dashed] (9.2,0.35)--(9.2,-8.65);
\node[font=\scriptsize\itshape,black!55,anchor=west] at (9.3,0.6)
  {requires \texttt{odrl-l:} $\rightarrow$};
\end{tikzpicture}%
}
\caption{From an ODRL rule to its UFO-L grounding and the \texttt{odrl-l:}
profile term. Each row reads left to right: the ODRL rule, what a native ODRL
evaluator computes, the full simple legal relator the grounding reveals, and the
\texttt{odrl-l:} term needed to represent it (Table~\ref{tab:positions}).
Shading: \colorbox{black!8}{\strut native} ODRL;
\colorbox{blue!14}{\strut conduct}- and \colorbox{orange!28}{\strut competence}-level
positions; dashed outlines mark the \texttt{odrl-l:} profile.
\textbf{Bold} marks positions absent from the evaluator's output;
\textcolor{red!65}{(?)} marks where the evaluator halts; $^\dagger$ marks the two
ODRL extension points. A prohibition with a remedy (row~4) and a strong
permission (row~5) each found a conduct relator and a competence relator at one
activation event. Row~6 instantiates the motivating scenario of
Section~\ref{sec:introduction}: distinct remedy relators per assigner make
violation-declaration authority provider-specific.}
\label{fig:grounding-reveals}
\end{figure*}
\section{Background}
\subsection{Open Digital Rights Language ODRL}
\label{sec:odrl}
An ODRL policy $\pi = \langle P, F, D \rangle$ consists of three disjoint
rule sets: \odrl{Permission} rules~$P$, \odrl{Prohibition} rules~$F$, and
\odrl{Duty} rules~$D$. Each rule $r = \langle a, x, y, t, C \rangle$
specifies an action type~$a$, an assignee~$x$, an assigner~$y$, a
target~$t$, and a constraint set~$C$~\cite{iannella2018odrl}. The
action~$a$ is a \emph{type} of act rather than a token, since granting a
normative position over a single past token event would be ontologically
incoherent: a policy regulates a kind of act, not one completed
occurrence of it. The target~$t$, by contrast, is a \emph{particular}
resource (a token), while the assignee~$x$ and assigner~$y$ may be either
particular agents (tokens, as in an \odrl{Agreement}) or role types (as in
an \odrl{Offer}).

ODRL defines three 
\odrl{Policy} subclasses by the degree to which parties are 
identified~\cite{iannella2018odrl}. An \odrl{Agreement} 
binds two identified particular agents, creating a bilateral 
normative bond. An \odrl{Offer} is binding \emph{erga omnes} (toward all): 
the assigner is identified while the assignee is the community 
satisfying the offer's conditions~\cite{griffo2021service}; it must not be evaluated by an ODRL Evaluator~\cite{odrl-formal-semantics}. An \odrl{Set} 
is type-level: it creates no binding bond and falls outside 
the scope of this grounding~\cite{iannella2018odrl}. \odrl{Permission} allows an
action when all constraints are satisfied. \odrl{Prohibition}
disallows an action but remains violable: the prohibited action
is performable and its performance triggers
\texttt{violated}~\cite{odrl-formal-semantics}. \odrl{Duty} obliges an
action: \texttt{fulfilled} upon performance and
\texttt{violated} when performance is no longer
possible~\cite{odrl-formal-semantics}. Duties appear in four structural roles: policy-level obligations (\odrl{obligation}), permission conditions (\odrl{duty}),
prohibition remedies (\odrl{remedy}), and failure consequences
(\odrl{consequence})~\cite{iannella2018odrl}. The remedy and consequence roles form a cascade defined in the 
information model~\cite{iannella2018odrl}: if the prohibition is 
infringed, all remedy Duties MUST be fulfilled; if a remedy Duty 
is not fulfilled, its consequence Duties become active. The Formal Semantics defines no activation
procedure for either~\cite{odrl-formal-semantics}.
When a Permission and Prohibition co-apply, the \odrl{conflict}
property selects a resolution strategy: \texttt{perm},
\texttt{prohibit}, or \texttt{invalid}
(default)~\cite{iannella2018odrl}. 
An ODRL rule \emph{prescribes}; a legal position \emph{exists}.
\odrl{Permission}, \odrl{Prohibition}, and \odrl{Duty} are policy-graph nodes, authored by a policy writer, evaluated by
an ODRL Evaluator. The legal positions these rules generate are ontological
individuals that inhere in the parties at activation: they are
not verdicts but normative facts about who may, must, or can do what, and in relation to whom.

\noindent The Formal Semantics defines \texttt{behaviour} as an optional
evaluator input~\cite{odrl-formal-semantics} governing how
unmentioned actions are treated: under \texttt{open}, silence
implies permission; under \texttt{closed} (the default), silence implies prohibition. Its ontological significance is that these two settings create incompatible normative positions. 

\noindent ODRL Profiles~\cite{iannella2018odrl} extend the vocabulary with new terms
(e.g.\ Rule subclasses) defined on top of the Core Vocabulary; a Policy
names the Profile(s) it uses via \texttt{profile}.
\subsection{UFO-L: Foundational Ontology of Legal Positions}
\label{sec:ufol}
UFO-L~\cite{griffo2018conceptual,griffo2023powers} extends the Unified
Foundational Ontology~\cite{Guizzardi2022-PORUUF} with a formal theory of
legal concepts based on Alexy's account of legal positions~\cite{alexy2010theory}. A \emph{moment} is an existentially dependent particular inhering in
exactly one bearer; by the non-migration
principle~\cite[a67]{Guizzardi2022-PORUUF} it cannot migrate between bearers. An \emph{externally dependent moment} inheres in one entity
while remaining existentially dependent on a mereologically disjoint entity. A \emph{relator} is a mereological aggregate of externally
dependent moments~\cite{griffo2021service}: each constituent moment inheres in
its own bearer while depending on the counterparty, and the relator is their mereological sum, capturing the bilateral normative bond as a single ontological individual, founded by a unique
event~\cite[a77]{Guizzardi2022-PORUUF}. \emph{Mediation} is a derived notion, not a primitive: a relator
mediates two entities precisely when it has parts inhering in each
of them~\cite[p.~240]{Guizzardi2005ofscm}.
 
\noindent Legal positions are moments borne by agents, classified as
\emph{Legal Entitlements} (positions implying an advantage) or
\emph{Legal Burdens/Lacks} (positions implying a
constraint)~\cite{griffo2021service}. Positions are always paired as
correlatives within a \emph{simple legal relator} bundling exactly one
such pair~\cite{griffo2021service}: at \emph{conduct level} 
(governing what agents may, must, or must not do) the 
four types are Permission--No Right and Right--Duty; 
at \emph{competence level} (governing the authority to 
bring normative positions into or out of existence through 
institutional acts~\cite{alexy2010theory,griffo2021service}) they are Power--Subjection and Immunity--Disability. Their mapping to ODRL is shown in 
Table~\ref{tab:positions}. Disability denies Power, rendering
institutional acts void~\cite{griffo2023powers}. Immunity can arise from any source that constitutionally
entrenches a position against revocation~\cite{griffo2023powers}. Legal relators are always brought into existence by a competent
agent's Power~\cite{griffo2023powers}, motivating the cross-level
design principle.
\vspace{-15pt}
\section{Deontic Analysis of ODRL: Permission, Prohibition, and Duty}
\label{sec:classification}
Permission, Prohibition, and Duty are the only ODRL constructs
that generate normative positions between parties; Asset and Party
are structural \emph{relata}, and Constraint an \emph{activation
condition} that determines when a rule applies but creates no legal
position~\cite{iannella2018odrl}.
\subsection{Permission: Underspecified Between Weak and Strong}
\label{sec:classification:permission}
\emph{Strong permission} persists under policy modification
and requires an Immunity blocking any subsequent prohibition, a
competence-level position~\cite{griffo2023powers,alexy2010theory}.
In UFO-L terms, Permission to Act and Duty to Omit are
\emph{opposites}: they cannot be co-borne by the same agent over
the same action and target. Under \texttt{behaviour=open},
undeclared actions carry no positive normative position: the
evaluator's default is a convention, not a granted Permission.
Under \texttt{behaviour=closed}, each declared permission creates
a Permission--No Right relator. However, the Formal Semantics
defines no persistence protection~\cite{odrl-formal-semantics}:
closed-world evaluation is consistent with but not equivalent to strong permission. The \texttt{behaviour} parameter therefore selects between ontologically incompatible normative positions, a distinction invisible to the evaluator but explicit in the UFO-L account (P1).
\subsection{Prohibition: Provably Sanctioned}
\label{sec:classification:prohibition}
Prohibitions are either \emph{regimented}, violation structurally
impossible or \emph{sanctioned}, violation possible but
consequential.
\begin{proposition}
\label{prop:sanctioned}
ODRL prohibition is \emph{sanctioned}: the prohibited action remains
performable and its performance triggers normative consequences.
\end{proposition}
\begin{proof}[Proof]
Assume regimentation. Then ~\cite{iannella2018odrl}, `if the
Prohibition has been \emph{infringed by the action being exercised},
then all remedies MUST be fulfilled'', describes an unreachable
scenario, rendering \texttt{odrl:remedy} permanently inactivatable.
Yet the Recommendation defines, exemplifies, and requires implementors
to support it~\cite{iannella2018odrl}. Contradiction. The Formal
Semantics confirms independently via an explicit \texttt{violated}
state~\cite{odrl-formal-semantics}. Since prohibition is sanctioned and violation triggers a
reparative duty, any complete grounding requires a competence-level
Power--Subjection pair to ground the violation-to-remedy
transition~(P2).
\end{proof}

\subsection{Duty: Achievement Semantics Only}
\label{sec:classification:duty}
Obligations divide into three types by their
satisfier~\cite{wright1963norm,guarino2024processes}:
\emph{achievement} (satisfied by a bounded event),
\emph{maintenance} (satisfied by an ongoing state), and
\emph{process} (satisfied by a continuous activity). Events cannot
be ongoing, since being ongoing is a temporary property incompatible
with the timeless character of events~\cite{guarino2024processes}.
\begin{proposition}
\label{prop:achievement}
The ODRL Formal Semantics covers achievement obligations only;
maintenance and process obligations lack evaluation semantics.
\end{proposition}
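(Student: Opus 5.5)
The plan is to read Proposition~\ref{prop:achievement} off the state machine the Formal Semantics assigns to \odrl{Duty}, and to show that this machine structurally fits only achievement obligations. The model I will use is the one recalled in Section~\ref{sec:odrl}. A duty is \texttt{fulfilled} upon performance of the action, and \texttt{violated} when performance is no longer possible~\cite{odrl-formal-semantics}.

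\textbf{Step 1.} I characterise the satisfier presupposed by this evaluation. The transition to \texttt{fulfilled} is triggered by the occurrence of an action token of type~$a$, so the satisfier is an event. I then invoke the premise stated just before the proposition, that events cannot be ongoing~\cite{guarino2024processes}. It follows that each satisfier is bounded, complete at one determinate point, and carries no temporal extension that the evaluator tracks. I also argue that \texttt{fulfilled} is terminal: once reached, no later state of the world returns the rule to \texttt{active} or \texttt{violated}. Together, a bounded satisfier and a terminal fulfilment state are exactly the profile of an \emph{achievement} obligation.

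\textbf{Step 2.} I show that maintenance obligations cannot be represented. Such an obligation is satisfied by a state holding throughout an interval, and it is violated at any instant at which that state lapses, even after earlier compliance. Suppose, for contradiction, that it is encoded as an ODRL duty. Then either it reaches \texttt{fulfilled} at a first compliant instant, or it never does.
\begin{itemize}
\item If it reaches \texttt{fulfilled} at a first compliant instant, terminality prevents the later lapse from registering as \texttt{violated}.
\item If it never reaches \texttt{fulfilled}, the obligation is never satisfied, even under continuous compliance.
\end{itemize}
Both outcomes misclassify the obligation. The other escape route is the constraint set~$C$, for instance a temporal constraint. But constraints are activation conditions (Section~\ref{sec:classification}), so they govern when the rule applies, not what satisfies it.

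\textbf{Step 3.} I show the same for process obligations. These are satisfied by a continuous activity, which is ongoing, and by the cited premise is therefore not an event. Since the only fulfilment trigger in the Formal Semantics is the performance of an action event, no evaluation rule exists for them.

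\textbf{Main obstacle.} The hardest step is making the terminality of \texttt{fulfilled}, and the absence of any interval- or activity-valued satisfier, rigorous rather than an argument from silence. My first attempt would enumerate every state transition the Formal Semantics defines for duties. I would then check that none takes an interval or an ongoing activity as argument, and that none leaves \texttt{fulfilled}. This turns ``lacks evaluation semantics'' into a checkable claim about the transition relation.
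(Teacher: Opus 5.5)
Your Steps~1 and~3, together with the transition check in your final paragraph, match the paper's proof sketch. The three-state duty semantics presupposes an event-bounded satisfier, and the evaluator defines no transition keyed to a persisting state or an ongoing activity. The paper asserts this directly from the Formal Semantics and claims nothing more, so your enumeration just makes that claim checkable. You depart from the paper in Step~2, where a dilemma resting on the terminality of \texttt{fulfilled} tries to prove a stronger claim: that no encoding of a maintenance obligation as an ODRL duty can be faithful. That step does not go through as written.

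Neither horn of the dilemma holds.
\begin{itemize}
\item The second horn is not a misclassification. Take an open-ended maintenance obligation, say ``keep the received copy encrypted''. If it is continuously complied with, it \emph{should} never be discharged. Staying in \texttt{not-set}, then moving to \texttt{violated} at the first lapse because the required performance is no longer possible, is exactly the right profile.
\item The first horn is not forced. Even granting terminality, an encoding could register fulfilment at the close of a bounded maintenance window rather than at the first compliant instant, with any earlier lapse yielding \texttt{violated}. That encoding also classifies correctly.
\end{itemize}
Terminality itself is not used by the paper. Nor is it clearly a property of the Formal Semantics, which, as the Introduction describes it, computes a verdict from a given state of the world. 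It does not define a transition relation whose exit arcs you could inspect.

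What actually rules out both encodings is the point you already make in Step~1. The evaluator's conditions for \texttt{fulfilled} and \texttt{violated} are keyed to the occurrence, or impossibility, of a performance event. Nothing it evaluates registers that a state persisted across an interval or lapsed within it. Make that the load-bearing step, as the paper does, and drop the dilemma and the appeal to terminality.
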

\begin{proof}[Proof sketch]
The three-state semantics, \texttt{fulfilled} upon a single
performance event, \texttt{violated} when performance is no longer
possible, \texttt{not-set} otherwise~\cite{odrl-formal-semantics}, presupposes
an event-bounded satisfier. Maintenance and process obligations have
no such satisfier: the evaluator defines no transition for a
persisting state or an ongoing activity.
\end{proof}
\vspace{-15pt}
\section{Cross-Level Design Principle}
\label{sec:crosslevel}
Any complete ontological grounding of ODRL requires
competence-level legal positions. All results are conditional on three
axioms formalising UFO-L~\cite{griffo2023powers}. \textbf{A1}~normative state changes require a
triggering institutional event; \textbf{A2}~such events require a
competent agent; \textbf{A3}~competence is a Power--Subjection pair.
These axioms hold under both regulative and constitutive readings of norm
activation: even constitutive rules presuppose a prior creation event
grounded in an agent's Power~\cite{griffo2023powers}.
\begin{definition}[Level-Complete and Cross-Level]
\label{def:level-complete}
Let $N$ be a normative concept and $L$ a level of legal
positions (either conduct or competence).
A set of legal positions $\mathcal{H}$ is an \emph{adequate
characterization} of $N$ relative to axiom set
$\mathcal{A}$ if $\mathcal{H}$ entails all normative consequences of $N$
under any model extension consistent with $\mathcal{A}$.
A \emph{normative consequence} of $N$ is a legal position entailed by
$N$; quantifying over model extensions makes adequacy robust, requiring
$\mathcal{H}$ to keep entailing those positions as further rules are
added, not only in one fixed model.
$N$ is \emph{level-complete} at $L$ if it has an adequate
characterization using only positions from $L$.
$N$ is \emph{cross-level} if every adequate characterization
requires positions from both conduct and competence levels.
\end{definition}
\begin{proposition}[Weak Permission is Conduct-Level-Complete]
\label{prop:weak-complete}
Relative to \textbf{A1}--\textbf{A3}, weak permission for an assignee
$x$ to perform an action $a$ on a target $t$, granted by an assigner
$y$, is adequately characterised at the conduct level by
$\{\PermPos(x,a,t),\NoRight(y,a,t)\}$ (here $x$ bears a Permission to
Act and $y$ the correlative No Right over $a$ on $t$): no
competence-level positions are required. This establishes that the cross-level requirement is non-trivial: only violable and consequential concepts require competence-level positions.
\end{proposition}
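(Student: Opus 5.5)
We prove the claim by exhibiting the conduct-level set $\mathcal{H}_w=\{\PermPos(x,a,t),\NoRight(y,a,t)\}$ and checking it against Definition~\ref{def:level-complete}. There are two things to verify. First, $\mathcal{H}_w$ entails every normative consequence of weak permission. Second, this entailment is preserved under every model extension consistent with \textbf{A1}--\textbf{A3}. Level-completeness only asks for \emph{some} adequate characterization drawn from a single level, so no minimality or uniqueness argument is needed.

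\emph{Enumerating the consequences.} Weak permission is the closed-world reading of Section~\ref{sec:classification:permission}: a declared Permission that carries no persistence protection. Its normative consequences are therefore exactly the two correlatives of a Permission--No Right simple legal relator. The assignee $x$ may perform $a$ on $t$. The assigner $y$ has no right to demand that $x$ omit $a$ on $t$. The opposition constraint, that Permission to Act and Duty to Omit cannot be co-borne over the same action and target, follows from $\PermPos(x,a,t)$ itself. Crucially, weak permission entails no violation state and no remedy, since there is nothing to infringe. It therefore entails no reparative duty, and hence no violation-to-remedy transition of the kind that forced a Power--Subjection pair in the proof of Proposition~\ref{prop:sanctioned}. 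It also entails no entrenchment against revocation, so there is no Immunity--Disability pair, in contrast with strong permission.

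\emph{Robustness under model extensions.} This is the step I expect to be the main obstacle. Take any model $\mathcal{M}'$ extending a model of $\mathcal{H}_w$ and consistent with \textbf{A1}--\textbf{A3}. The worry is that \textbf{A1}--\textbf{A3} might themselves force competence positions to be consequences of the permission. They do not, for the following reason.
\begin{enumerate}[label=(\roman*)]
\item \textbf{A1}--\textbf{A3} constrain only normative \emph{state changes}. They say that each change requires an institutional event, a competent agent, and a Power--Subjection pair.
\item The Power that founded the relator is a precondition of its creation, held by whoever enacted the policy. It is not a position borne \emph{in virtue of} weak permission, and so it is not among the consequences of $N$.
\item If a later prohibition is added in $\mathcal{M}'$, weak permission is simply overridden. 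Weak permission never claimed persistence, so this is not a failure of entailment.
\end{enumerate}
To make this rigorous, I would phrase the argument so that adequacy concerns the positions entailed by $N$ at a given state. The claim to establish is then that the conduct relator's two positions remain entailed in every extension where the permission is in force, which holds because the relator's constituent moments are fixed by the founding event and cannot migrate (the non-migration principle). I would check this mechanically alongside the other axioms, as the abstract indicates the paper does.

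\emph{Non-triviality remark.} Finally, contrast this result with Proposition~\ref{prop:sanctioned}. There, violability plus consequence made the Power--Subjection pair unavoidable; here, the absence of violation means no competence position is required. Together they show that the cross-level requirement discriminates between concepts rather than holding vacuously.
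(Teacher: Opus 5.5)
The paper gives no written proof of this proposition. It rests on Axiom~\ref{ax:perm-relator-weak}, which yields exactly $\PermPos(x,a,t)$ and $\NoRight(y,a,t)$ when a non-strong permission is activated, and on the benchmark's weak-versus-strong discriminating problems. Your conclusion agrees with this, and your conceptual reasons are the right ones: nothing to infringe, hence no violation-to-remedy transition; no entrenchment, hence no Immunity--Disability pair. There is no error in the result, but the rigor sits in the wrong place.

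The step you flag as the main obstacle is immediate. $\mathcal{H}_w$ \emph{is} the consequence set, and first-order entailment is monotone, so $\mathcal{H}_w$ entails its own members in every extension consistent with \textbf{A1}--\textbf{A3}. You do not need to restrict to extensions ``where the permission is in force''; that restriction in fact quietly weakens Definition~\ref{def:level-complete}. Nor do you need the non-migration principle, which only forbids a moment from changing bearer and says nothing about a position remaining entailed.

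What actually carries the weight is the upper bound: $N$ entails no position that $\mathcal{H}_w$ fails to entail. You assert this (``its normative consequences are therefore exactly\ldots'') rather than prove it, and the absence of the usual trigger for a competence position does not by itself establish non-entailment. The clean way to discharge it is a single model of the axioms in which a permission $p$ with $\neg\mathit{strong}(p)$ is activated, and in which the only position moments, apart from the founding pair discussed below, are the two parts of $\rho_P$. Every consequence of $N$ must hold in that model, so it is one of the two positions of $\mathcal{H}_w$. Axiom~\ref{ax:perm-relator-weak} gives the reverse inclusion.

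Building that model is where your point~(ii) stops being a side remark and becomes the load-bearing premise. Under \textbf{A1}--\textbf{A3}, the activation itself, a granting by $y$, is a normative state change. It therefore forces a competent agent and hence a Power--Subjection pair, which the model cannot omit. On a literal reading of ``legal position entailed by $N$'', that pair would be a competence-level consequence not entailed by $\mathcal{H}_w$. The proposition would then fail, and so would its non-triviality claim, since every granted norm would be cross-level. The result holds only if presupposed founding competence is excluded from the consequences of $N$. This is the paper's implicit stance: relators are always brought into existence by a competent agent's Power, for every norm alike. State that exclusion as an explicit premise, and build the model modulo the founding pair.
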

\begin{theorem}[Strong Permission is Cross-Level]
\label{thm:strong-crosslevel}
Relative to \textbf{A1}--\textbf{A3}, strong permission of $x$ to perform $a$ on $t$, granted by $y$, has no adequate characterization
using conduct-level positions alone.
\end{theorem}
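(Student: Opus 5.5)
The argument will be a countermodel construction. Definition~\ref{def:level-complete} quantifies adequacy over \emph{all} model extensions consistent with \textbf{A1}--\textbf{A3}, so it suffices to exhibit one such extension in which every candidate conduct-only set $\mathcal{H}$ fails to entail some normative consequence of strong permission. I take as given, from Section~\ref{sec:classification:permission}, that strong permission means Permission to Act persists under policy modification: no subsequent prohibition can bring a Duty to Omit over $a$ on $t$ into existence for $x$. That persistence claim is the normative consequence I will show conduct-level positions cannot secure.

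\textbf{Step 1: bound what conduct-level sets can say.} Fix an arbitrary $\mathcal{H}$ built only from Permission, No Right, Right and Duty positions over $x$, $y$, $a$, $t$. The best candidate is $\{\PermPos(x,a,t),\NoRight(y,a,t)\}$ from Proposition~\ref{prop:weak-complete}, possibly enriched with extra Right--Duty pairs. Each of these positions is a moment inhering in a bearer at the current state. None of them says anything about which institutional events may occur later, or whether any agent is competent to bring them about.

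\textbf{Step 2: extend the model with a revocation event.} Add an agent $z$, possibly $y$ itself, and a Power--Subjection relator in which $z$ holds Power over $x$ with respect to positions on $a$ and $t$. This is permitted by \textbf{A3} and nothing in $\mathcal{H}$ forbids it. Then add an institutional event $e$, performed by $z$, that founds a new prohibition relator containing a Duty to Omit for $x$ over $a$ on $t$; \textbf{A1} and \textbf{A2} are satisfied because $e$ is triggered by a competent agent. Since Permission to Act and Duty to Omit are opposites, the Permission component ceases to be borne after $e$. The extended model is therefore consistent with \textbf{A1}--\textbf{A3} and with $\mathcal{H}$ holding before $e$, yet the persistence consequence fails. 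So $\mathcal{H}$ is not adequate.

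\textbf{Step 3: identify what blocks the countermodel.} Any adequate characterisation must exclude $z$'s Power, and by the correlative structure that requires an Immunity for $x$ paired with a Disability for $z$. This is exactly the competence-level pair shown in row~5 of Figure~\ref{fig:grounding-reveals}, which confirms the theorem's direction.

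The main obstacle is Step~2's consistency claim. I must show that no conduct-level position, not even some exotic Right--Duty pair in $\mathcal{H}$, implicitly constrains future Power exercise. My first attempt will be a level-separation lemma: under \textbf{A1}--\textbf{A3}, constraints on which institutional events are valid are stated only via Power, Subjection, Immunity and Disability. Conduct positions govern acts and not the validity of acts, so they are invariant under adding competence relators and events. If the axioms as formalised do not yield this directly, I would make the separation explicit as the Isabelle encoding's typing of positions by level and verify the countermodel mechanically.
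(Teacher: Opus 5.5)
Your proposal is correct and is essentially the argument the paper indicates (Section~\ref{sec:classification:permission}, the P1 paragraph, and the with/without-immunity benchmark contrast). Strong permission means persistence against later prohibition; \textbf{A1}--\textbf{A3} make any such prohibition the exercise of a Power; and what blocks that Power is a competence-level Disability of the assigner, paired with the assignee's Immunity in $\rho_I$ (Axiom~\ref{ax:disability-block}), so a conduct-only $\mathcal{H}$ always admits your revocation countermodel. Two calibrations: take $z=y$, because the paper's Disability is borne only by the assigner, so $\rho_I$ does not block a third party's Power and Step~3 overclaims for $z\neq y$; and your level-separation worry needs no extra typing assumption, since a conduct-level Duty on $y$ to refrain from prohibiting can be violated without voiding the prohibition, which is exactly the duty/disability distinction your countermodel exploits.
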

\noindent
\begin{theorem}[Sanctioned Prohibition is Cross-Level]
\label{thm:sanctioned-crosslevel}
Relative to \textbf{A1}--\textbf{A3}, sanctioned prohibition has no
adequate characterization using conduct-level positions alone. A norm is \emph{violable} if the regulated action
remains performable despite the norm; it is
\emph{consequential} if violation activates a reparative
normative position.
\end{theorem}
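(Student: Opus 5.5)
The plan is a proof by contradiction that mirrors the argument for Proposition~\ref{prop:sanctioned} but is carried out inside the axiom system \textbf{A1}--\textbf{A3} and Definition~\ref{def:level-complete}. Suppose $\mathcal{H}$ is an adequate characterization of a sanctioned prohibition of $x$ performing $a$ on $t$, issued by $y$, and that $\mathcal{H}$ contains only conduct-level positions. By the UFO-L pairing of correlatives, $\mathcal{H}$ can contain only instances of $\PermPos$, $\NoRight$, $\Right$ and $\Duty$, bundled into Permission--No Right and Right--Duty relators. The natural candidate is $\{\Duty(x,\neg a,t),\Right(y,\neg a,t)\}$ (Duty to Omit plus Right to Omission), possibly together with a remedy Duty--Right pair. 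I will show that no such set entails every normative consequence of sanctioned prohibition in every model extension consistent with \textbf{A1}--\textbf{A3}.

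First I fix the relevant normative consequence. By consequentiality, whenever $a$ is performed, a reparative position (the remedy Duty of some agent $z$) comes into existence. By violability, there are admissible models in which $a$ is performed, so this consequence is not vacuous. This step uses Proposition~\ref{prop:sanctioned}. The passage from "no remedy duty" to "remedy duty exists" is a normative state change. By \textbf{A1} it requires a triggering institutional event $e$. By \textbf{A2}, $e$ requires a competent agent. By \textbf{A3}, that competence consists of a Power borne by some agent, with the correlative Subjection borne by $z$. Hence in every model of \textbf{A1}--\textbf{A3} in which the prohibition is violated, some $\Power$/$\Subj$ pair exists. Adequacy therefore requires $\mathcal{H}$ to entail the existence of such a pair.

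The remaining, and I expect hardest, step is to show that conduct-level positions cannot entail a Power--Subjection pair. I would attempt this with a countermodel. Take any model $M$ of \textbf{A1}--\textbf{A3} satisfying $\mathcal{H}$ in which $a$ is never performed, and in which no Power concerning the remedy is borne by anyone. Such a model is consistent, since \textbf{A1}--\textbf{A3} impose competence only on state changes that actually occur. Next, extend $M$ by adding the performance of $a$ but no Power, and argue that all conduct-level facts of $\mathcal{H}$ still hold, because the Duty to Omit and the Right to Omission are merely violated, not destroyed. In this extension $\mathcal{H}$ holds. Either the remedy duty is absent, contradicting consequentiality, or it is present without a Power, contradicting \textbf{A1}--\textbf{A3}. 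In both cases $\mathcal{H}$ fails to entail the required consequence in a consistent extension.

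The delicate point is ruling out the objection that the remedy could be characterised purely conditionally, as a Duty already "in" $\mathcal{H}$. My answer is that, under Definition~\ref{def:level-complete}, positions are individuals founded by events (\cite[a77]{Guizzardi2022-PORUUF}). A remedy duty that exists before violation would not be consequential; it would be a standing duty, and standing duties distinguish sanctioned prohibition from a plain conjunction of obligations. This contrasts with Proposition~\ref{prop:weak-complete}, where no violation-triggered state change exists, so \textbf{A1} never fires.

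Finally, I conclude that every adequate characterization must include competence-level positions. It also includes conduct-level ones, since the Duty to Omit is itself a consequence. Sanctioned prohibition is therefore cross-level. I would check the countermodel step mechanically in Isabelle/HOL by encoding \textbf{A1}--\textbf{A3} and asking Nitpick for the extension.
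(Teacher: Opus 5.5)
Your proof is correct and follows the paper's argument. Sanctioned prohibition is violable and consequential, so the violation-to-remedy transition is a normative state change, and the \textbf{A1}$\to$\textbf{A2}$\to$\textbf{A3} chain makes a Power--Subjection pair a normative consequence that no conduct-only set can entail; this is the step closing the proof of Proposition~\ref{prop:sanctioned} and the ``cross-level chain'' the paper discharges mechanically. Your countermodel only makes explicit what the paper takes for granted, and just its first horn does any work: a structure with the violation-triggered remedy duty present but no Power violates \textbf{A1}--\textbf{A3}, so it is not an admissible extension at all.
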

\noindent
\begin{theorem}[Violable Norms Require Both Levels]
\label{thm:crosslevel}
Relative to \textbf{A1}--\textbf{A3}, any norm that is
(i)~violable and (ii)~consequential has no adequate
characterization using conduct-level positions alone.
\end{theorem}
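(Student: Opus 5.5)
The argument is by contradiction, using the three axioms directly and the quantification over model extensions in Definition~\ref{def:level-complete}. Let $N$ be a norm that is violable and consequential, and suppose $\mathcal{H}$ is an adequate characterization of $N$ containing only conduct-level positions (Permission, Duty, Right, No Right). I will exhibit a normative consequence of $N$ that $\mathcal{H}$ fails to entail in some model extension consistent with \textbf{A1}--\textbf{A3}. That contradicts adequacy.

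\textbf{Step 1: extract a state change.} Violability says the regulated action remains performable, so some model admits a violating performance. Consequentiality says this violation activates a reparative position, for example a remedy duty $\Duty(x,a',t')$, that did not exist before it. This is a normative state change, and it is among the normative consequences of $N$.

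\textbf{Step 2: climb to the competence level.} By \textbf{A1}, the state change requires a triggering institutional event $e$, such as a declaration of violation. By \textbf{A2}, $e$ requires a competent agent $z$. By \textbf{A3}, $z$'s competence is a pair $\Power(z,\cdot)$, $\Subj(x,\cdot)$. So every model of $N$ in which the consequence materialises contains a Power--Subjection pair. That pair is therefore itself a normative consequence of $N$ which any adequate $\mathcal{H}$ must entail.

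\textbf{Step 3: show conduct positions cannot entail it.} This is the step I expect to be the main obstacle, because the paper's axioms are stated informally. The plan is a countermodel extension. Take any model $M$ of $\mathcal{H}\cup\{\textbf{A1},\textbf{A2},\textbf{A3}\}$ and build $M'$ in three moves:
\begin{enumerate}[label=(\roman*)]
\item keep every conduct-level relator of $M$ unchanged;
\item add a rule, or an Immunity--Disability pair, under which no agent holds Power to declare the violation;
\item keep the violating act, but add no institutional event $e$.
\end{enumerate}
Then $M'$ still satisfies $\mathcal{H}$, since conduct positions are untouched, and it is consistent with \textbf{A1}--\textbf{A3}. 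In $M'$ the reparative position never arises, so the consequence required in Step~1 fails. Hence $\mathcal{H}$ does not entail all consequences of $N$ across extensions.

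The delicate point in (ii)--(iii) is to check that conduct-level positions are logically independent of competence-level ones under \textbf{A1}--\textbf{A3}. This holds because those axioms only constrain competence positions from below, by changes; they never derive competence from conduct.

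\textbf{Relation to earlier results.} This is the general schema of which Theorems~\ref{thm:strong-crosslevel} and~\ref{thm:sanctioned-crosslevel} are instances, and Proposition~\ref{prop:sanctioned} shows ODRL prohibition satisfies both hypotheses. The contrast with Proposition~\ref{prop:weak-complete} confirms that consequentiality is used essentially. Weak permission triggers no state change, so Step~1 yields nothing and the countermodel is not needed.
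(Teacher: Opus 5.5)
Your Steps~1--2 match the paper's argument for this theorem, which is the \textbf{A1}$\to$\textbf{A2}$\to$\textbf{A3} chain (checked in the benchmark as an entailment problem): the violation-to-remedy transition is a normative state change, so it needs an institutional event, a competent agent, and a Power--Subjection pair. The gap is in Step~3, which does not isolate what a conduct-only $\mathcal{H}$ lacks. In $M'$ the reparative position fails to arise because of move~(iii): no declaration is performed, so \textbf{A1} blocks the state change. That move works just as well against the paper's own cross-level characterization $\{\Duty,\Right,\Power,\Subj\}$ of a sanctioned prohibition. Holding $\Power(y,\mathit{decl}(a),t)$ does not oblige $y$ to exercise it, so a model in which $x$ violates and $y$ never declares satisfies that set and \textbf{A1}--\textbf{A3}, and no remedy arises. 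So if ``violation activates the remedy'' counts as a normative consequence that $\mathcal{H}$ must entail, then no set of positions is adequate. In that case $\rho_F$ plus $\rho_R$ fails too, and the theorem holds only vacuously.

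Step~3 also does not refute what Step~2 produced. Step~2 gives the pair only in models where the consequence materialises. In exactly those models \textbf{A1}--\textbf{A3} supply that pair whatever $\mathcal{H}$ is, so no conduct-only countermodel can exist there. That is why you were pushed to a model where the consequence does not materialise, and in that model the link to the pair is lost.

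The missing idea is that the competence positions must be an \emph{unconditional} consequence of $N$. The Power to declare a violation and its Subjection are standing positions constituted when the norm becomes active, not when it is violated; this is exactly how Axiom~\ref{ax:proh-relator-remedy} founds $\rho_R$ at the activation event. With that in place, the countermodel is simply a model of $\mathcal{H}$ and \textbf{A1}--\textbf{A3} with no violation, no declaration, and no Power over $\mathit{decl}(a)$. Move~(ii) alone does the work; move~(iii) and the appeal to the reparative position should be dropped.

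Your independence claim is also too strong as stated. On the paper's reading, every relator, including the conduct relators of $\mathcal{H}$, is brought into existence by some agent's Power, so \textbf{A1}--\textbf{A3} do yield competence positions from conduct ones. What you need is the content-specific fact that, absent a declaration event, nothing forces a Power over $\mathit{decl}(a)$.
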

\noindent
\begin{principle}[Cross-Level Design]
\label{prin:crosslevel}
Any normative language with violable, consequential norms must
provide competence-level positions (Power, Subjection, Immunity,
Disability) alongside conduct-level ones (Permission, Duty,
Right, No right): a conduct-only vocabulary cannot ground
violation-to-consequence transitions regardless of implementation.
\end{principle}
\noindent Since ODRL prohibition is sanctioned
(Proposition~\ref{prop:sanctioned}) and remedies are normative consequences of
violation by definition~\cite{iannella2018odrl}, Theorem~\ref{thm:crosslevel}
entails that any complete grounding of ODRL requires competence-level positions.
The Berliner Ensemble prohibition of Section~\ref{sec:introduction} is exactly
such a norm: its remedy is enforceable only if some party holds the Power to
declare the violation, a competence-level position that no conduct-only
vocabulary supplies (grounded concretely in Example~\ref{ex:pol1-relators}).
\vspace{-15pt}
\section{Formal Grounding of ODRL in UFO-L}
\label{sec:grounding}
Each activated ODRL rule maps to a \emph{simple legal relator}
(Section~\ref{sec:ufol}), the minimal adequate grounding required
by Theorem~\ref{thm:crosslevel}: a single rule creates exactly one
correlative pair of legal positions, and a simple legal relator
bundles exactly one such pair~\cite{griffo2023powers}. A
\emph{forbearance} $\mathit{rfr}(a)$ is the \emph{omission} of
action $a$: in UFO-L, refraining from performing $a$ is a perdurant
of kind omission, distinct from act $a$~\cite{griffo2018conceptual}.
Accordingly, a Duty to refrain from $a$ is a Duty to Omit $a$, and
its correlative is a Right to Omission of $a$. We use
$\Forbearance$ to denote the sort of all omissions; $\mathit{Act}$
and $\Forbearance$ are disjoint sorts with
$\mathit{rfr}(a)\in\Forbearance$ for every $a\in\mathit{Act}$, and
the map $\mathit{rfr}$ is injective. We write $\mathit{decl}(a)$ for
the institutional act of declaring a violation of~$a$;
$\mathit{decl}$ maps $\mathit{Act}$ to $\mathit{Act}$ injectively.
Throughout, the eight legal positions are written in a three-place
form: $\PermPos(x,a,t)$ reads ``$x$ bears a Permission to Act over
action~$a$ on target~$t$'', and likewise for $\Duty$, $\Right$,
$\NoRight$, $\Power$, $\Subj$, $\Immunity$, and $\Disability$.

\noindent Each active rule becomes a two-party legal relationship: one
position for the assignee and the matching correlative for the assigner.
\begin{definition}[Rules as Simple Legal Relators]
\label{def:rule-relator}
{\small
\[
\begin{aligned}[t]
\textbf{Permission } \rho_P &:~
\PermPos(x,a,t)\land\NoRight(y,a,t),\quad a\in\mathit{Act}\\
\textbf{Strong permission } \rho_I &:~
\Immunity(x,a,t)\land\Disability(y,a,t)\quad[\text{if }\mathit{strong}(p)]\\
\textbf{Prohibition } \rho_F &:~
\Duty(x,\mathit{rfr}(a),t)\land\Right(y,\mathit{rfr}(a),t),\quad \mathit{rfr}(a)\in\Forbearance\\
\textbf{Prohibition remedy } \rho_R &:~
\Power(y,\mathit{decl}(a),t)\land\Subj(x,\mathit{decl}(a),t)\quad[\text{if }\hasrem(f)]\\
\textbf{Duty } \rho_D &:~
\Duty(x,a,t)\land\Right(y,a,t),\quad a\in\mathit{Act}
\end{aligned}
\]}
Each rule founds exactly one relator via $\mathit{founds}$ (the conduct relators
$\rho_P,\rho_F,\rho_D$); additionally, a prohibition with \odrl{remedy} founds
$\rho_R$ via $\mathit{founds\_rem}$, and a strong permission founds $\rho_I$ via $\mathit{founds\_imm}$, both at the same activation event. The distinct
predicates keep the two relators separate.
\end{definition}
\noindent
\noindent
\begin{table}[t]
\centering\scriptsize
\setlength{\tabcolsep}{2pt}
\caption{ODRL rules, UFO-L relators, and legal positions surfaced
  at activation~\cite{griffo2023powers}.
  $\bullet$~correlative; $\blacksquare$~added by grounding;
  $\dagger$~profile extension (\texttt{odrl-l:});
  \textsc{ee}~=~assignee; \textsc{er}~=~assigner.}
\label{tab:positions}
\begin{tabular}{@{}lllcc@{}}
\toprule
\textbf{ODRL rule} & \textbf{Relator} &
\textbf{Position} & \textbf{Bearer} & \textbf{Level} \\
\midrule
\multirow{2}{*}{\odrl{Perm.} (weak)}
  & \multirow{2}{*}{$\rho_P$}
  & \cellcolor{blue!14}Permission to Act     & \textsc{ee} & conduct \\
  && \cellcolor{blue!14}No Right~$\bullet$    & \textsc{er} & conduct \\
\addlinespace[2pt]
\multirow{2}{*}{\odrl{Perm.} (strong$^\dagger$)}
  & \multirow{2}{*}{$+\rho_I$}
  & \cellcolor{orange!28}Immunity~$\blacksquare\dagger$    & \textsc{ee} & competence \\
  && \cellcolor{orange!28}Disability~$\blacksquare\dagger$ & \textsc{er} & competence \\
\addlinespace[2pt]
\multirow{2}{*}{\odrl{Prohibition}}
  & \multirow{2}{*}{$\rho_F$}
  & \cellcolor{blue!14}Duty to Omit          & \textsc{ee} & conduct \\
  && \cellcolor{blue!14}Right to\ Omission~$\bullet$ & \textsc{er} & conduct \\
\addlinespace[2pt]
\multirow{2}{*}{\odrl{Proh.}+remedy$^\dagger$}
  & \multirow{2}{*}{$+\rho_R$}
  & \cellcolor{orange!28}Power~$\blacksquare\dagger$       & \textsc{er} & competence \\
  && \cellcolor{orange!28}Subjection~$\blacksquare\dagger$ & \textsc{ee} & competence \\
\addlinespace[2pt]
\multirow{2}{*}{\odrl{Duty}}
  & \multirow{2}{*}{$\rho_D$}
  & \cellcolor{blue!14}Duty to Act           & \textsc{ee} & conduct \\
  && \cellcolor{blue!14}Right to\ Action~$\bullet$   & \textsc{er} & conduct \\
\bottomrule
\multicolumn{5}{@{}p{\columnwidth}}{\tiny
  Strong perm.\ founds $\rho_P$ (weak row) \emph{and}
  $\rho_I$ at same event (Ax5.2).
  $\dagger$\texttt{odrl-l:hasRemedy} (Ax5.4);
  \texttt{odrl-l:stronglyPermitted} (Ax5.2).
  Immunity not exclusive to strong permission~\cite{griffo2023powers}.
  Founding: $\rho_P,\rho_F,\rho_D$ via $\mathit{founds}$;
  $\rho_I$ via $\mathit{founds\_imm}$;
  $\rho_R$ via $\mathit{founds\_rem}$.}
\end{tabular}
\end{table}
\noindent
\vspace{15pt}
\subsection{Resolving Underspecifications}
\label{sec:grounding:resolution}
\begin{proposition}[Open-World Normative Vacuum]
\label{prop:open-vacuum}
Under \texttt{behaviour=open}, for any action $a$ not regulated
by a rule in $\pi$, the position set $\mathit{Pos}(\pi,e)$
(Definition~\ref{def:ground}) contains no Permission,
No Right, Duty, or Right over $\langle a,t\rangle$ for any bearer.
\end{proposition}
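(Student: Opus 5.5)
The argument is a direct unfolding of Definition~\ref{def:ground}. I assume that definition makes $\mathit{Pos}(\pi,e)$ the union, over all rules $r\in P\cup F\cup D$ activated at $e$, of the positions that are parts of the relators $r$ founds via $\mathit{founds}$, $\mathit{founds\_imm}$ and $\mathit{founds\_rem}$. Under \texttt{behaviour=open}, no other clause contributes positions: the open-world default is an evaluator convention, not a granted Permission (Section~\ref{sec:classification:permission}). The first step is therefore a \emph{closure} observation. Every position in $\mathit{Pos}(\pi,e)$ is a constituent moment of some relator $\rho_P,\rho_I,\rho_F,\rho_R,\rho_D$ of Definition~\ref{def:rule-relator} that is founded by a rule of $\pi$. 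By non-migration and the fact that a relator is the mereological sum of exactly one correlative pair, its parts are exactly the two displayed positions.

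The second step is a case analysis on the rule kind, reading off the action argument of each position. A permission $\langle a',x,y,t',C\rangle$ yields $\PermPos(x,a',t')$ and $\NoRight(y,a',t')$, and if strong also $\Immunity,\Disability$ over $a'$. A duty yields $\Duty(x,a',t'),\Right(y,a',t')$. A prohibition yields $\Duty(x,\mathit{rfr}(a'),t'),\Right(y,\mathit{rfr}(a'),t')$ and possibly $\Power,\Subj$ over $\mathit{decl}(a')$. The $\Immunity$, $\Disability$, $\Power$ and $\Subj$ positions are not among the four kinds in the statement, so they can be discarded at once.

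The third step shows that none of the remaining positions is over $\langle a,t\rangle$. For permission and duty rules the action argument is $a'$, and $a'\neq a$ for every pair $\langle a',t'\rangle = \langle a,t\rangle$ because $a$ is unregulated. For prohibition rules the argument is $\mathit{rfr}(a')\in\Forbearance$, while $a\in\mathit{Act}$. Since $\mathit{Act}$ and $\Forbearance$ are disjoint sorts, $\mathit{rfr}(a')\neq a$ regardless of $a'$. Injectivity of $\mathit{rfr}$ is not even needed for this step, though it would be needed for the dual claim about $\mathit{rfr}(a)$.

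The main obstacle is pinning down ``not regulated by a rule in $\pi$''. If it means only that no rule has action $a$ on target $t$, the argument above goes through pairwise. If ODRL action refinement (\texttt{includedIn}) lets a rule over a broader action regulate $a$ implicitly, I would take ``regulated'' to include that closure and argue in the same way. I would also make sure the remedy and consequence duties, whose activation the Formal Semantics leaves undefined, are counted as rules of $\pi$, so their $\rho_D$ relators fall under the duty case. With those conventions fixed, the proof is a finite case split, and it is checkable as the corresponding benchmark problem by showing that the negated goal is unsatisfiable together with the sort-disjointness axiom.
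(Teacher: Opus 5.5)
The obstacle is not what ``regulated'' means but what membership in $\mathit{Pos}(\pi,e)$ means. Your content analysis in steps two and three is right, but step one replaces the paper's definition of $\mathit{Pos}(\pi,e)$, and the argument does not carry over. The paper gives no written proof of this proposition. What it does fix is Definition~\ref{def:ground}: $q\in\mathit{Pos}(\pi,e)$ iff $\mathcal{A}\cup\{\activates(e,r)\mid r\in\pi\}$ entails $q$, with the policy-graph facts of $\pi$ understood. Showing that no Permission, No Right, Duty or Right over $\langle a,t\rangle$ is in $\mathit{Pos}(\pi,e)$ is therefore a \emph{non-entailment} claim. It needs a model of the premises in which those positions fail.

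Your closure observation (every position is part of a founded relator, and that relator has exactly the two displayed parts) does not follow from $\mathcal{A}$:
\begin{itemize}
\item Axioms~\ref{ax:perm-relator-weak}--\ref{ax:obl-relator} only assert that witnesses exist.
\item Unique Founding (Axiom~\ref{ax:unique-founding}, the paper's ``non-migration'') fixes which relator a rule founds at which event, not what that relator's parts are.
\item Correlativity (Axiom~\ref{ax:correlativity}) is a per-content unique-existence biconditional. It neither excludes extra parts nor constrains positions that lie outside every relator.
\end{itemize}
In fact, extending any model of the premises by a fresh agent bearing a free-standing Permission over $\langle a,t\rangle$ violates no axiom. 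So the mechanical check you propose at the end would fail: refuting the negated goal amounts to deriving that no such position exists. The correct check is a countermodel, consistent with the paper treating the open/closed contrast as a discriminating problem.

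The repair is to build that countermodel explicitly. Take one fresh relator and two fresh moments per triggered instance of Axioms~\ref{ax:perm-relator-weak}--\ref{ax:obl-relator}. Interpret distinct action names as distinct elements of $\mathit{Act}$, make $\mathit{rfr}$ injective into a disjoint $\Forbearance$, and restrict all position, $\mathit{bearer}$, $\mathit{partOf}$ and $\mathit{cnt}$ facts to these witnesses. Your step three then shows the model has none of the four position kinds over $\langle a,t\rangle$. Here $a'\neq a$ becomes a choice made in the model, so no unique-names assumption is needed.

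You must then check that this model satisfies all of $\mathcal{A}$, and that check exposes a hypothesis your reading hides. Suppose $\pi$ contains an activated permission and an activated prohibition with the same assignee $x$, action $b$ and target $t'$. Axioms~\ref{ax:perm-relator-weak} and~\ref{ax:proh-relator-conduct} then force $\PermPos(x,b,t')$ and $\Duty(x,\mathit{rfr}(b),t')$, which contradicts Axiom~\ref{ax:cross-relator}. Likewise, a strong permission and a prohibition sharing assigner, action and target contradict Axiom~\ref{ax:disability-block}. ODRL admits such policies: that is what \odrl{conflict} resolves, and the benchmark reports the cross-relator and disability-block problems as unsatisfiable. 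For such $\pi$ the premises have no model. Then $\mathit{Pos}(\pi,e)$ contains every position, including $\PermPos(x,a,t)$ for an unregulated $a$, so the proposition as stated is false. Your union-of-parts argument would nonetheless ``prove'' it.

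So you need an explicit consistency assumption on $\pi$ that excludes these conflicts. Under it, the canonical model also satisfies Unique Founding, the typing axioms and Correlativity (one relator per rule, mode and event; matched correlative witnesses per content). Your case analysis then completes the proof.
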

\paragraph{P1: Behaviour parameter.}
Proposition~\ref{prop:open-vacuum} establishes that under
\texttt{open} the open-world default is a runtime convention
with no ontological counterpart in $\mathcal{A}$.
Under \texttt{closed}, each declared permission creates relator
$\rho_P$; adding $\Immunity(x,a,t)\land\Disability(y,a,t)$
realises strong permission persisting under subsequent normative
change (Theorem~\ref{thm:strong-crosslevel}). The
\texttt{behaviour} parameter is thus a choice between
ontologically incompatible normative positions, not merely a
configuration switch.

\paragraph{P2: Violation authority.}
For every prohibition with a remedy, the grounding adds
$\Power(y,\mathit{decl}(a),t)\land\Subj(x,\mathit{decl}(a),t)$,
surfacing the evaluator's implicit role as normative authority. The Power constituted at activation licenses $y$ not only to declare the violation
directly but also to \emph{delegate} sanctioning competence to a third
party, creating a new Power--Subjection pair in another agent via a
normative transfer act~\cite{griffo2023powers}.
Governance frameworks can thereby specify \emph{which} party holds
violation-declaration authority for \emph{which} prohibitions
(Theorem~\ref{thm:sanctioned-crosslevel}).
\paragraph{P3: Assigner correlatives.}
The grounding surfaces a Right to Omission for every assigner
of a prohibition and a Right to an Action for every assigner of a
duty (Table~\ref{tab:positions}, Axiom~\ref{ax:proh-relator-conduct}).
These correlatives are ontologically entailed but absent from the
ODRL evaluator: without them, no reasoner can identify whose
normative standing grounds a given remedy, even when dataset
identity is known.
\paragraph{P4: Normative authority.}
The Power--Subjection pair constituted at activation
(Axiom~\ref{ax:proh-relator-remedy}) makes violation-declaration
authority explicit, enabling data sovereignty representation
in governance frameworks.
\subsection{First-Order Axiomatization}
\label{sec:grounding:axioms}
Each axiom answers the same question: when an ODRL rule activates, what legal individuals come into existence? The answer is always a simple legal relator that bundles one correlative pair, one position
per party, founded by the activation event and aggregated via $\mathit{partOf}$.
\noindent\textit{Predicate key.}
\emph{ODRL rule (policy graph):}
$\Perm(p)$/$\Proh(f)$/$\mathit{obl}(d)$~=~Permission/Prohibition/Duty
rule node ($\mathit{obl}$, the Duty \emph{rule}, is distinct from
$\Duty$, the UFO-L Duty \emph{position});
$\mathit{aee}(r,x)$/$\mathit{aer}(r,y)$~=~$r$ has assignee~$x$ / assigner~$y$;
$\mathit{act}(r,a)$~=~$r$ regulates action type~$a$;
$\mathit{tgt}(r,t)$~=~$r$ has target token~$t$;
$\hasrem(f)$~=~$f$ carries \odrl{remedy};
$\mathit{strong}(p)$~=~$p$ is strongly permitted (profile extension,
\S\ref{sec:completeness:profile}).
\emph{Activation and founding (events):}
$\activates(e,r)$~=~$e$ activates~$r$~\cite{odrl-formal-semantics};
$\founds$/$\mathit{founds\_rem}$/$\mathit{founds\_imm}$~=~conduct / remedy /
immunity founding of a relator by a rule at an event.
\emph{UFO-L relator and moments (ontology):}
$\mathit{Rel}(\rho)$~=~$\rho$ is a UFO-L legal relator;
$\mathit{ODRLRel}(\rho)$~=~$\rho$ is a relator founded by an ODRL rule;
$\mathit{bearer}(m,x)$~=~moment~$m$ inheres in~$x$;
$\mathit{partOf}(m,\rho)$~=~$m$ is part of relator~$\rho$;
$\mathit{cnt}(m,a,t)$~=~$m$ has normative content over action type~$a$
and target~$t$.
\noindent We inherit from UFO the \emph{relator mediation} relation:
a relator $\rho$ mediates $x$ and $y$ iff it has parts inhering in $x$ and $y$ respectively~\cite[p.~240]{Guizzardi2005ofscm}, derived from
$\mathit{partOf}$ and $\mathit{bearer}$, not re-axiomatised here.

\noindent Activating a permission creates a relator in which the
assignee holds a Permission to Act and the assigner the correlative No Right,
both over the same action and target.
\begin{axiom}[Permission Relator Weak]
\label{ax:perm-relator-weak}
\[
\begin{aligned}[t]
\forall p,x,y,a,t,e.\;&
\Perm(p)\land\mathit{aee}(p,x)\land\mathit{aer}(p,y)\land
\mathit{act}(p,a)\land\mathit{tgt}(p,t)\land\activates(e,p)\to\\
\exists\rho_P,l,n.\;&
\mathit{Rel}(\rho_P)\land\founds(e,\rho_P,p)\land
\PermPos(l)\land\mathit{bearer}(l,x)\land
\mathit{cnt}(l,a,t)\\
&\land\mathit{partOf}(l,\rho_P)\land
\NoRight(n)\land\mathit{bearer}(n,y)\land
\mathit{cnt}(n,a,t)\land\mathit{partOf}(n,\rho_P)
\end{aligned}
\]
\end{axiom}
\noindent A strong permission also founds a second relator at the same
activation event: besides the permission relator $\rho_P$ (Permission--No Right,
Axiom~\ref{ax:perm-relator-weak}), it founds the immunity relator $\rho_I$,
which gives the assignee an Immunity and the assigner the correlative Disability.
\begin{axiom}[Permission Relator Strong]
\label{ax:perm-relator-strong}
\[
\begin{aligned}[t]
\forall p,x,y,a,t,e.\;&
\Perm(p)\land\mathit{strong}(p)\land\mathit{aee}(p,x)\land
\mathit{aer}(p,y)\land\mathit{act}(p,a)\land\mathit{tgt}(p,t)\land\activates(e,p)\to\\
\exists\rho_I,im,db.\;&
\mathit{Rel}(\rho_I)\land\mathit{founds\_imm}(e,\rho_I,p)\land
\Immunity(im)\land\mathit{bearer}(im,x)\land\mathit{cnt}(im,a,t)\land\\
&\mathit{partOf}(im,\rho_I)\land
\Disability(db)\land\mathit{bearer}(db,y)\land
\mathit{cnt}(db,a,t)\land\mathit{partOf}(db,\rho_I)
\end{aligned}
\]
\end{axiom}
\noindent Activating a prohibition creates the prohibition relator $\rho_F$, in
which the assignee owes a Duty to Omit the action $a$ (its content is the
omission $\mathit{rfr}(a)$, not $a$) and the assigner holds the correlative
Right to that omission.
\begin{axiom}[Prohibition Relator---Conduct]
\label{ax:proh-relator-conduct}
\[
\begin{aligned}[t]
\forall f,x,y,a,t,e.\;&
\Proh(f)\land\mathit{aee}(f,x)\land\mathit{aer}(f,y)\land
\mathit{act}(f,a)\land\mathit{tgt}(f,t)\land\activates(e,f)\to\\
\exists\rho_F,d,c.\;&
\mathit{Rel}(\rho_F)\land\founds(e,\rho_F,f)\land
\Duty(d)\land\mathit{bearer}(d,x)\land\mathit{cnt}(d,\mathit{rfr}(a),t)\land\\
&\mathit{partOf}(d,\rho_F)\land
\Right(c)\land\mathit{bearer}(c,y)\land
\mathit{cnt}(c,\mathit{rfr}(a),t)\land\mathit{partOf}(c,\rho_F)
\end{aligned}
\]
\end{axiom}

\noindent A prohibition with a remedy founds a competence-level relator $\rho_R$: the assigner gains a standing Power to declare a violation (over
$\mathit{decl}(a)$) and the assignee the correlative Subjection.
\begin{axiom}[Prohibition Relator---Remedy]
\label{ax:proh-relator-remedy}
\begin{align*}
\forall f,x,y,a,t,e.\;
  &\Proh(f)\land\hasrem(f)\land\mathit{aee}(f,x)\land\mathit{aer}(f,y)\land{}\\
  &\mathit{act}(f,a)\land\mathit{tgt}(f,t)\land\activates(e,f)\to{}\\
  &\exists\rho_R,pw,s.\;
   \mathit{Rel}(\rho_R)\land\mathit{founds\_rem}(e,\rho_R,f)\land{}\\
  &\quad\Power(pw)\land\mathit{bearer}(pw,y)\land
   \mathit{cnt}(pw,\mathit{decl}(a),t)\land\mathit{partOf}(pw,\rho_R)\land{}\\
  &\quad\Subj(s)\land\mathit{bearer}(s,x)\land
   \mathit{cnt}(s,\mathit{decl}(a),t)\land\mathit{partOf}(s,\rho_R)
\end{align*}
\end{axiom}
\noindent Activating an obligation creates the duty relator $\rho_D$, in which
the assignee owes a Duty to Act on $a$ and the assigner holds the correlative
Right to that action.
\begin{axiom}[Obligation Relator]
\label{ax:obl-relator}
\[
\begin{aligned}[t]
\forall d,x,y,a,t,e.\;&
\mathit{obl}(d)\land\mathit{aee}(d,x)\land\mathit{aer}(d,y)\land
\mathit{act}(d,a)\land\mathit{tgt}(d,t)\land\activates(e,d)\to\\
\exists\rho_D,du,c.\;&
\mathit{Rel}(\rho_D)\land\founds(e,\rho_D,d)\land
\Duty(du)\land\mathit{bearer}(du,x)\land\mathit{cnt}(du,a,t)\land\\
&\mathit{partOf}(du,\rho_D)\land
\Right(c)\land\mathit{bearer}(c,y)\land\mathit{cnt}(c,a,t)\land\mathit{partOf}(c,\rho_D)
\end{aligned}
\]
\end{axiom}
\noindent For each founding mode a rule founds one relator at a given event (and each relator has one founding event); since the modes differ, founding in
two modes yields two separate relators.
\begin{axiom}[Unique Founding]
\label{ax:unique-founding}
For $\mathit{fd}\in\{\founds,\mathit{founds\_rem},\mathit{founds\_imm}\}$:
\[
\begin{aligned}[t]
&\forall r,\rho_1,\rho_2,e.\;
\mathit{fd}(e,\rho_1,r)\land\mathit{fd}(e,\rho_2,r)\to\rho_1=\rho_2\\
&\forall r,e_1,e_2,\rho.\;
\mathit{fd}(e_1,\rho,r)\land\mathit{fd}(e_2,\rho,r)\to e_1=e_2
\end{aligned}
\]
\textit{(Non-migration and temporal non-overlap.)}
\end{axiom}
\noindent A relator is an ODRL relator when a rule of the matching kind founds
it: any rule via $\mathit{founds}$, a prohibition via $\mathit{founds\_rem}$, a
permission via $\mathit{founds\_imm}$.
\begin{axiom}[ODRL Relator Typing]
\label{ax:odrl-rel-typing}
\[
\forall e,\rho,r.\;
\mathit{fd}(e,\rho,r)\land G(r)\to\mathit{ODRLRel}(\rho)
\]
where $(\mathit{fd},G)\in
\{(\founds,\,\Perm\lor\Proh\lor\mathit{obl}),\;
(\mathit{founds\_rem},\,\Proh),\;
(\mathit{founds\_imm},\,\Perm)\}$.
\end{axiom}
 \begin{axiom}[ODRL Relator Is a Relator]
\label{ax:odrl-rel-is-rel}
\[
\forall\rho.\;\mathit{ODRLRel}(\rho)\to\mathit{Rel}(\rho)
\]
\end{axiom}
\noindent Within an ODRL relator each correlative pair comes as a matched unit: one side is present, and unique, exactly when its correlative is, so a relator never holds half a pair or a duplicate.
\begin{axiom}[Correlativity]
\label{ax:correlativity}
For each correlative pair $(X,Y)\in\{(\PermPos,\NoRight),(\Duty,\Right),
(\Power,\Subj),(\Immunity,\Disability)\}$, every ODRL simple legal relator
$\rho$ contains exactly one position from that pair over the same
content~\cite{griffo2023powers}. For all $\rho,a,t$ with
$\mathit{ODRLRel}(\rho)$:
{\scriptsize
\[
\begin{aligned}
&(\exists!\,u.\;X(u)\land\mathit{partOf}(u,\rho)\land\mathit{cnt}(u,a,t))\\
&\quad\leftrightarrow
  (\exists!\,v.\;Y(v)\land\mathit{partOf}(v,\rho)\land\mathit{cnt}(v,a,t))
\end{aligned}
\]}
\end{axiom}
\begin{axiom}[Normative Position Incompatibility]
\label{ax:cross-relator}
No agent can simultaneously bear a Permission to Act and a Duty to Omit
over the same $\langle a,t\rangle$. This is a \emph{normative} fact
independent of UFO type disjointness, which governs whether a single
moment individual can have two types, not whether a bearer can hold two
distinct moments of incompatible types:
\[
\begin{aligned}[t]
\forall l,d,x,a,t.\;
&\PermPos(l)\land\mathit{bearer}(l,x)\land\mathit{cnt}(l,a,t)\land{}\\
&\Duty(d)\land\mathit{bearer}(d,x)\land\mathit{cnt}(d,\mathit{rfr}(a),t)
  \to\bot
\end{aligned}
\]
\end{axiom}
\begin{corollary}[Permission--Duty Conflict Within a Relator]
\label{ax:conflict}
No ODRL simple legal relator $\rho$ contains both a Permission to Act and
a Duty to Omit over the same $\langle a,t\rangle$ for the same bearer.
(Immediate from Axiom~\ref{ax:cross-relator}.)
\end{corollary}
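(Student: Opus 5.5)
The plan is to derive the corollary directly from Axiom~\ref{ax:cross-relator}, arguing by contradiction and simply discarding the relator-membership information. Suppose, towards a contradiction, that some $\rho$ with $\mathit{ODRLRel}(\rho)$ contains two moments $l$ and $d$ with the following properties:
\begin{itemize}
\item $\PermPos(l)$, $\mathit{partOf}(l,\rho)$ and $\mathit{cnt}(l,a,t)$;
\item $\Duty(d)$, $\mathit{partOf}(d,\rho)$ and $\mathit{cnt}(d,\mathit{rfr}(a),t)$;
\item $\mathit{bearer}(l,x)$ and $\mathit{bearer}(d,x)$ for the same agent $x$.
\end{itemize}

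First, I drop the conjuncts $\mathit{partOf}(l,\rho)$, $\mathit{partOf}(d,\rho)$ and $\mathit{ODRLRel}(\rho)$, since the corollary's hypothesis strengthens the antecedent of Axiom~\ref{ax:cross-relator} by exactly these conjuncts. Second, I instantiate the universally quantified variables of Axiom~\ref{ax:cross-relator} with $l,d,x,a,t$. The remaining conjuncts match its antecedent verbatim, so we obtain $\bot$. Third, I discharge the assumption to conclude
\[
\neg\exists \rho,l,d,x,a,t.\;\mathit{ODRLRel}(\rho)\land\PermPos(l)\land\mathit{partOf}(l,\rho)\land\mathit{bearer}(l,x)\land\mathit{cnt}(l,a,t)\land\Duty(d)\land\mathit{partOf}(d,\rho)\land\mathit{bearer}(d,x)\land\mathit{cnt}(d,\mathit{rfr}(a),t).
\]
This is a one-step first-order consequence that any of the cited provers, or Isabelle's \texttt{blast}/\texttt{metis}, should discharge immediately.

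The only genuine obstacle is how the informal phrase ``a Duty to Omit over the same $\langle a,t\rangle$'' is made precise. I will read it, following Definition~\ref{def:rule-relator} and Axiom~\ref{ax:proh-relator-conduct}, as a $\Duty$ moment with content $\mathit{rfr}(a)$ and target $t$, rather than content $a$. Under this reading the pattern matches Axiom~\ref{ax:cross-relator} exactly.

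A Duty with content $a$ itself, which is a Duty to Act, is not a Duty to Omit. Because $\mathit{Act}$ and $\Forbearance$ are disjoint sorts, such a case is simply outside the scope of the corollary.

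I would also remark, as a sanity check, that the corollary is strictly weaker than Axiom~\ref{ax:cross-relator}: the axiom already forbids this co-bearing across relators, and the corollary is its restriction to a single relator.
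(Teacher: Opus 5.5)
Your proposal is correct and matches the paper. The paper's entire justification is that the corollary is ``immediate from Axiom~\ref{ax:cross-relator},'' and your move of dropping the $\mathit{partOf}$ and $\mathit{ODRLRel}$ conjuncts, then instantiating the axiom at $l,d,x,a,t$, is exactly that derivation; your reading of ``Duty to Omit'' as a $\Duty$ moment with content $\mathit{rfr}(a)$ is the one that makes the match with the axiom's antecedent verbatim.
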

\begin{axiom}[Disability Precludes Prohibition Creation]
\label{ax:disability-block}
\begin{multline*}
  \forall f,x,y,a,t.\;
  \Proh(f)\land\mathit{aee}(f,x)\land\mathit{aer}(f,y)
  \land\mathit{act}(f,a)\land\mathit{tgt}(f,t) \\
  \to\neg\exists\mathit{db}.\;
  \Disability(\mathit{db})\land\mathit{bearer}(\mathit{db},y)
  \land\mathit{cnt}(\mathit{db},a,t)
\end{multline*}
\end{axiom}
\noindent\textbf{Predicate shorthand.}
We write $\PermPos(x,a,t)$ for
$\exists l.\,\PermPos(l)\land\mathit{bearer}(l,x)\land\mathit{cnt}(l,a,t)$;
analogously for $\Duty$, $\Right$, $\NoRight$, $\Power$, $\Subj$,
$\Immunity$, $\Disability$.

\begin{definition}[Position Instantiation]
\label{def:ground}
For policy $\pi=\langle P,F,D\rangle$ and activation event $e$, the
\emph{position set} $\mathit{Pos}(\pi,e)$ is the set of legal position
instances entailed by $\mathcal{A}=\{$Axioms~\ref{ax:perm-relator-weak}--\ref{ax:disability-block}$\}$:
\[
  \mathit{Pos}(\pi,e) =
  \{\,q \mid \mathcal{A}\cup
  \{\,\activates(e,r)\mid r\in\pi\,\}
  \models q\,\}
\]
The term \emph{grounding} is reserved for the ontological interpretation
of ODRL in UFO-L; $\mathit{Pos}(\pi,e)$ is the \emph{instantiation}
operation deriving concrete position individuals at a specific activation.
\end{definition}
\begin{proposition}[Evaluator Soundness and Incompleteness]
\label{prop:faithfulness}
Under \texttt{closed}: evaluator permits $a$ for $x$ on $t$
$\Rightarrow$ $\PermPos(x,a,t)\in\mathit{Pos}(\pi,e)$.
Under both settings: prohibition activation $\Rightarrow$
$\Duty(x,\mathit{rfr}(a),t)$; prohibition with remedy $\Rightarrow$
$\Power(y,\mathit{decl}(a),t)$.
The converse fails: $\mathit{Pos}(\pi,e)$ additionally entails No Right,
Right to Omission, Immunity, and Disability (Table~\ref{tab:positions}).
The evaluator is thus \emph{sound} but not \emph{complete}:
four of the eight entailed positions are absent from the
Evaluation Report.
\end{proposition}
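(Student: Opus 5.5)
The proof splits into a soundness half, which is a case-by-case reading of the activation axioms, and an incompleteness half, which exhibits positions in $\mathit{Pos}(\pi,e)$ that the evaluator does not report. Fix a policy $\pi=\langle P,F,D\rangle$, an activation event $e$, and the rule parameters $x,y,a,t$ from the relevant $\mathit{aee},\mathit{aer},\mathit{act},\mathit{tgt}$ facts. By Definition~\ref{def:ground}, a position $q$ is in $\mathit{Pos}(\pi,e)$ iff $\mathcal{A}\cup\{\activates(e,r)\mid r\in\pi\}\models q$. So it suffices to derive each position from the relevant activation axiom.

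\emph{Soundness.} Under \texttt{closed}, the evaluator permits $a$ for $x$ on $t$ only when some declared $p\in P$ with matching assignee, action and target is active. This is the default-deny reading of \texttt{behaviour=closed} in~\cite{odrl-formal-semantics}, and I will state it as the interface assumption linking evaluator verdicts to $\activates(e,p)$. Axiom~\ref{ax:perm-relator-weak} then yields an $l$ with $\PermPos(l)\land\mathit{bearer}(l,x)\land\mathit{cnt}(l,a,t)$, which is exactly the shorthand $\PermPos(x,a,t)$. For a prohibition $f$, Axiom~\ref{ax:proh-relator-conduct} yields $\Duty(x,\mathit{rfr}(a),t)$ in the same way. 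For a prohibition with $\hasrem(f)$, Axiom~\ref{ax:proh-relator-remedy} yields $\Power(y,\mathit{decl}(a),t)$. Neither of these two cases mentions \texttt{behaviour}, so they hold under both settings.

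\emph{Incompleteness.} The same existential witnesses supply further positions. Axiom~\ref{ax:perm-relator-weak} gives $\NoRight(y,a,t)$, and Axiom~\ref{ax:proh-relator-conduct} gives $\Right(y,\mathit{rfr}(a),t)$. For a strong permission, Axiom~\ref{ax:perm-relator-strong} gives $\Immunity(x,a,t)$ and $\Disability(y,a,t)$. Each of these is in $\mathit{Pos}(\pi,e)$ for any $\pi$ containing the corresponding rule. Counting against Table~\ref{tab:positions}, the evaluator's outputs correspond to Permission, Duty to Omit, Duty to Act and Power, while the eight positions include these four plus No Right, Right, Immunity and Disability. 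That leaves exactly four absent from the Evaluation Report.

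\emph{Main obstacle.} The delicate step is making ``absent from the Evaluation Report'' precise, since the evaluator's output vocabulary is informal. I would fix it as the verdict set $\{\texttt{active},\texttt{violated},\texttt{fulfilled},\texttt{not-set}\}$ attached to rule nodes. That vocabulary has no term whose bearer is the assigner, and no competence-level term other than the implicit authority that Proposition~\ref{prop:sanctioned} attributes to the evaluator. Hence none of the four listed positions is expressible in the report, and the converse implication fails. A minor side check is that $\mathcal{A}$ remains satisfiable with these activation facts, so that the entailments are non-trivial. For this I would appeal to the Isabelle/HOL and benchmark model checks rather than construct a model by hand.
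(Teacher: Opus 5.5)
Your soundness half is the paper's argument. The three clauses are read directly off Axioms~\ref{ax:perm-relator-weak}, \ref{ax:proh-relator-conduct} and~\ref{ax:proh-relator-remedy}; the paper discharges exactly these as benchmark entailment problems and in Isabelle/HOL. Your four extra positions also do follow from the same witnesses plus Axiom~\ref{ax:perm-relator-strong}.

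The gap is the step ``that leaves exactly four absent''. Your list of eight is not the paper's eight types (Permission, No Right, Duty, Right, Power, Subjection, Immunity, Disability). You split Duty into Duty to Omit and Duty to Act, although the latter has no soundness clause in the statement. You merge Right to Omission and Right to Action into one ``Right''. And you drop Subjection entirely, although Axiom~\ref{ax:proh-relator-remedy} yields $\Subj(x,\mathit{decl}(a),t)$ from the very witness that gives you $\Power(y,\mathit{decl}(a),t)$. Counted consistently by type, the soundness clauses cover three types and five entailed types remain.

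Your formalisation of the report makes this worse. If the report is a verdict on a rule node with no term borne by the assigner, then $\Power(y,\mathit{decl}(a),t)$, which is borne by the assigner $y$, is excluded too. That contradicts your own soundness list. Carving Power out via Proposition~\ref{prop:sanctioned} is ad hoc: that proposition only says a complete grounding needs a Power--Subjection pair. Figure~\ref{fig:grounding-reveals} in fact marks Power and Subjection as absent from the evaluator's output, matching the abstract's ``two to eight''. So the number depends on a stipulation about the report that the axioms do not fix, and the paper is itself inconsistent on it. What your argument does establish, and what suffices for ``sound but not complete'', is that at least the four named positions are entailed yet unreported. Conclude that, or fix one explicit verdict-to-position map and count consistently under it.

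Two smaller repairs are also needed.
\begin{itemize}
\item Definition~\ref{def:ground} literally adds only the $\activates$ facts. The activation axioms, however, need $\Perm(p)$ or $\Proh(f)$, $\mathit{aee}$, $\mathit{aer}$, $\mathit{act}$, $\mathit{tgt}$ (and $\hasrem$, $\mathit{strong}$) in their antecedents. You use these silently, so add the policy-graph assertions to the premise set explicitly.
\item The benchmark's consistency witness (one permission, two agents) does not certify the prohibition-with-remedy and strong-permission policies your incompleteness witnesses need, so build those small models by hand. This is not cosmetic. Suppose a permission and a prohibition are both activated for the same assignee over the same $\langle a,t\rangle$, or an activated strong permission coexists with a prohibition by the same assigner over the same $\langle a,t\rangle$. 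Then the premises are inconsistent by Axiom~\ref{ax:cross-relator} or Axiom~\ref{ax:disability-block}, $\mathit{Pos}(\pi,e)$ contains everything, and both halves of the proposition hold only vacuously for such $\pi$.
\end{itemize}
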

\begin{example}[\texttt{:BE-Agreement} grounded]
\label{ex:pol1-relators}
Activating \texttt{:BE-Agreement} (assignee \texttt{:KulturPortal}, assigner
\texttt{:BerlinerEnsemble}, action \odrl{distribute}, target
\texttt{:BE-Showtimes}) founds two relators at one event, kept distinct by
Axiom~\ref{ax:unique-founding}. Axiom~\ref{ax:proh-relator-conduct} founds the
conduct relator $\rho_F$: \texttt{:KulturPortal} a Duty to Omit (content
$\mathit{rfr}(\odrl{distribute})$), \texttt{:BerlinerEnsemble} the correlative
Right to Omission. As it carries a remedy, Axiom~\ref{ax:proh-relator-remedy}
founds the competence relator $\rho_R$: \texttt{:BerlinerEnsemble} a Power over
$\mathit{decl}(\odrl{distribute})$ (to declare a violation),
\texttt{:KulturPortal} the correlative Subjection. \texttt{:DT-Agreement} founds
its own $\rho_R$ for \texttt{:DeutschesTheater}, so the two theaters hold
\emph{distinct} violation-declaration Powers, the authority left implicit by the
evaluator's two \texttt{violated} verdicts (Section~\ref{sec:introduction}).
\end{example}
\subsection{The ODRL Legal Profile}
\label{sec:completeness:profile}
The grounding motivates a companion OWL profile, the \emph{ODRL Legal
Profile} (\texttt{odrl-l:})\footnote{\url{https://w3id.org/odrl-legal/}},
enabling policy authors to assert competence-level positions at
authoring time~(P4). The profile defines 15~classes (1~abstract
position supertype, the 8~leaf classes of Table~\ref{tab:positions},
1~abstract relator root, and 5~concrete relator classes) and 8~properties.
The two ODRL extension points ($\dagger$ in
Table~\ref{tab:positions}) are \texttt{odrl-l:hasRemedy} and
\texttt{odrl-l:stronglyPermitted}. Three OWL approximation limits apply, all stemming from OWL's lack of function
symbols and its decidable description-logic fragment: $\mathit{rfr}(a)$
forbearances are not representable (no term for the omission as content);
\texttt{stronglyPermitted} carries no OWL entailment (an annotation flag, since
founding the immunity relator by Axiom~\ref{ax:perm-relator-strong} is a rule
OWL cannot derive); and Axioms~\ref{ax:correlativity}
and~\ref{ax:cross-relator} require full FOL (a unique-existence biconditional
over shared content, and a contradiction over $\mathit{rfr}$-related content).
\vspace{-15pt}
\section{Evaluation}
\label{sec:validation}
We encode the axioms of Section~\ref{sec:grounding:axioms} in three
independent systems: TPTP/FOF for Vampire~5.0.0~\cite{kovacs2013first}
and E~3.2.5~\cite{schulz2013system}, SMT-LIB~2 (\texttt{UF}) for
Z3~4.15.4~\cite{de2008z3}, and Isabelle/HOL~\cite{nipkow2002isabelle},
which verifies all axioms and the corollary, the sanctioned-prohibition
lifecycle, and Proposition~\ref{prop:faithfulness}. The benchmark has
39~problems over 36~identifiers (GRND001--036), three of which carry two
variants for the open/closed-world, sanctioned/regimented, and
with/without-immunity contrasts. Each problem is emitted as a TPTP/FOF
file, an SMT-LIB~2 file, and a Turtle policy over DRK culture-dataspace
entities~\cite{isprs-archives-XLVIII-M-9-2025-1515-2025}, and inlines
only the axioms it needs: entailment problems confirm those axioms are
sufficient, discriminating problems that they are not too strong.

\noindent\textbf{Consistency.}
The full axiom set is satisfiable, witnessed by a minimal finite model (one \odrl{Permission}, two agents) from Vampire's
finite-model schedule. \textbf{Entailment.} Each problem derives a grounding claim from a strict subset of axioms,
including the three clauses of Proposition~\ref{prop:faithfulness} and
the cross-level chain behind Theorem~\ref{thm:crosslevel}. One
instantiates the motivating scenario of Section~\ref{sec:introduction}:
activating \texttt{drk:BerlinerEnsemble}'s prohibition over
\texttt{drk:TheaterShowtimeDataset} derives exactly the Power that
licenses violation declaration.

\noindent\textbf{Discriminating.}
These witness the ontological distinctions an evaluator cannot express: open- vs.\ closed-world permission, sanctioned vs.\ regimented
prohibition, weak vs.\ strong permission, multi-policy conflict, and the end-to-end remedy chain. All checks pass (Table~\ref{tab:results}); E omits the
3~satisfiable problems (no finite-model finder) and Z3 times out on them
in the \texttt{UF} fragment, while Vampire's portfolio schedule
discharges them.

\begin{table}[t]
\centering\scriptsize
\setlength{\tabcolsep}{3pt}
\caption{Axiom coverage matrix. Each row reports the SZS status of
  problems that exercise that axiom (possibly alongside others).
  $\times n$: the axiom is a schema with $n$~FOL instances, each
  tested by a dedicated problem.
  \textbf{Thm}~=~theorem; \textbf{Uns}~=~unsatisfiable;
  \textbf{Sat}~=~satisfiable~\cite{sutcliffe2008szs}.}
\label{tab:results}
\begin{tabular}{@{}lc@{\quad}lc@{}}
\toprule
\textbf{Axiom (Ref)} & \textbf{SZS} & \textbf{Axiom (Ref)} & \textbf{SZS} \\
\midrule
perm-relator-weak\ \ (Ax\ref{ax:perm-relator-weak})    & Thm     & correlativity ×4\ (Ax\ref{ax:correlativity})   & Thm/Uns \\
perm-relator-strong\ (Ax\ref{ax:perm-relator-strong})    & Thm/Uns & cross-relator\ \ \ (Ax\ref{ax:cross-relator}) & Uns     \\
proh-relator-conduct\ (Ax\ref{ax:proh-relator-conduct})  & Thm/Uns & disability-block\ (Ax\ref{ax:disability-block})  & Uns     \\
proh-relator-remedy\ (Ax\ref{ax:proh-relator-remedy})   & Thm     & odrl-rel-is-rel\ \ (Ax\ref{ax:odrl-rel-is-rel})& Thm     \\
obl-relator\ \ \ \ \ (Ax\ref{ax:obl-relator})   & Thm/Sat & A1--A3\ \ & Thm \\
unique-founding ×6\ (Ax\ref{ax:unique-founding})      & Thm     & consistency\ \ \ \ \ (---) & Sat     \\
odrl-rel-typing ×3\  (Ax\ref{ax:odrl-rel-typing})    & Thm     &                            &         \\
\bottomrule
\end{tabular}
\end{table}
\vspace{-15pt}
\section{Related Work}
\label{sec:related}
A formal semantics for ODRL is still emerging. The behaviour of an evaluator is described in the \emph{ODRL Formal Semantics}, a W3C ODRL Community Group draft report (10 March 2026)~\cite{odrl-formal-semantics}. The draft specifies rule applicability and the evaluation of obligations and prohibitions, providing a foundation for formal reasoning about ODRL. However, aspects such as remedy evaluation and violation handling remain underspecified and are expressed at the level of evaluator decisions rather than a formal account of authority and enforcement~\cite{odrl-formal-semantics}. Steyskal and Polleres~\cite{steyskal2015odrl}, and Bonatti et al.~\cite{bonatti2025odrl}, provide formal and trace-based semantics for ODRL~2.1 and~2.2, capturing sanctioned interpretations and open/closed-world assumptions. Salas et al.~\cite{salas2026normalisation,salas2025odrl} further develop semantic frameworks for policy comparison, query answering, and normalisation. However, these approaches do not explicitly represent the normative positions underlying evaluator decisions or competence-level authority. A survey of usage control frameworks identifies the lack of formal semantics as a persistent gap~\cite{akaichi2022usagecontrolspecificationenforcement}. Dam et al.~\cite{dam2023policy} instantiate this issue through 22 policy patterns covering achievement, maintenance, and process obligations, but without an explicit account of the normative positions or authority roles they induce. Rodríguez-Doncel and Roman~\cite{rodriguez2025conformance} further show that conformance applies to policies but not processors, highlighting the lack of a unified theoretical grounding for policy enforcement. GUCON~\cite{akaichi2023gucon} provides a formal policy meta-model based on graph patterns and deontic logic, which can be instantiated using OWL~2, SHACL, and ODRL.

\noindent We adopt UFO-L as the foundational ontology due to its representation of legal and social normative concepts such as commitments, claims, and powers, which align closely with deontic notions in policy languages. In contrast, foundational ontologies such as BFO~\cite{Otte2022BFOBF} and DOLCE~\cite{gangemi2002sweetening} provide general ontological structures but do not explicitly model normative and institutional aspects required for representing policy authority, obligations, and permissions. De~Vos et al.~\cite{devos2019odrl} translate ODRL into Answer Set
Programming using deontic fluents for permission, prohibition, and
institutional power, but do not formalise the underlying normative
positions between parties, as is also the case for input/output
logic~\cite{makinson2000inputoutput} and defeasible deontic
logic~\cite{Governatori2013-GOVCSA}. Oliveira et al.~\cite{oliveira2024legion} align ODRL with UFO-L legal moments via OWL class equivalences, but without a full axiomatic grounding of correlative normative and competence-level structures. The Soberana Ontology~\cite{cunha2025soberana} applies UFO to model business ecosystems in data spaces, whereas our work focuses on the deontic layer for policy specification.
\vspace{-15pt}
\section{Conclusion}
\label{sec:conclusion}
An ODRL evaluator returns a verdict, but not what normative situation
a policy creates, who holds the authority behind it, or who may
declare it violated; a formal semantics fixes the verdict, not what
exists once a policy is in force. Answering that needs an ontology of
legal positions rather than another account of verdicts, which is why
we grounded ODRL in the foundational ontology UFO-L rather than giving
it a denotational semantics.
Our central result is the Cross-Level Design Principle: any policy
language whose norms can be broken and carry consequences needs
conduct-level positions (who may or must do what) together with
competence-level positions (who may declare a violation and impose its
consequences), because a conduct-only vocabulary cannot ground the
step from violation to remedy. This is a necessity result, not a
richer mapping, and it is what separates the work from earlier
conduct-level-only alignments. Applying it to ODRL, we map each
activated rule to a concrete legal relationship between assignee and
assigner, surface the correlatives an evaluator leaves silent (the
assigner's entitlement and the authority to declare a violation),
widen what a policy makes explicit from two positions to eight, and
let a companion profile state that authority at authoring time; the
whole theory is checked mechanically.

Those checks rely on general-purpose provers because no ODRL engine
yet covers the full vocabulary and constraint language, and the same
gap makes the grounding reusable: the profile lets standard reasoners
carry it on an evaluator's verdicts, querying authority, detecting
conflicts, and grounding remedies without a bespoke engine. In the
opening scenario the two providers' authority structures come out
distinct, so which provider may demand which remedy is fixed by the
model, not the implementation. The grounding covers simple legal
relationships only; richer constructs such as the Liberty relator lie
outside its scope.
\noindent\textbf{Future Work.}
(1)~extend conflict detection to inheritance-based
(\texttt{odrl:inheritFrom}) policies; and (2)~connect the grounding to
LLM-based ODRL generation, where the surfaced positions give an
ontological target for policy extraction and compliance checking.

\noindent
\\
\textbf{Declaration on Generative AI:}
The authors used Claude (Anthropic) and Grammarly for grammar and sentence refinement,
reviewed all content, and take full responsibility for the publication's content.

\noindent
\\
\textbf{Acknowledgments:} Supported by a Fraunhofer ICON grant through the Next Generation Dataspaces Initiative (NGDI), and the German Ministry for Research and Education (BMBF) project WestAI (Grant no. 01IS22094D). Also supported by the JUPITER AI Factory (JAIF) Project, funded via the EuroHPC Joint Undertaking (JU) under Grant ID 101250682 (\url{https://cordis.europa.eu/project/id/101250682}); and in part by the German Federal Government and the German States of North Rhine-Westphalia and Hesse, coordinated by the Forschungszentrum Jülich and the involving consortium partners, including RWTH Aachen University, Fraunhofer, and the Hessian AI Ecosystem.
\newpage
\bibliographystyle{vancouver}
\bibliography{references}
\end{document}